\newtheorem{clm}{Claim}
\newcommand{\TSSDOM}{{\sc TSS-D1M}}
\newcommand{\defparproblem}[4]{
  \vspace{3mm}
\noindent\fbox{
  \begin{minipage}{.95\textwidth}
  \begin{tabular*}{\textwidth}{@{\extracolsep{\fill}}lr} \textsc{#1}  & {\bf{Parameter:}} #3 \\ \end{tabular*}
  {\bf{Input:}} #2  \\
  {\bf{Question:}} #4
  \end{minipage}
  }
  \vspace{2mm}
}
\theoremstyle{definition}
\theoremstyle{plain}
\newtheorem{theorem}{Theorem}
\newtheorem{lemma}[theorem]{Lemma}
\newtheorem{observation}[theorem]{Observation}
\newtheorem{proposition}[theorem]{Proposition}
\theoremstyle{remark}
\newtheorem{remark}[theorem]{Remark}
\begin{document}
\title{{\sc Target Set Selection} Parameterized by Vertex Cover and More}
%
%
\author[1]{Suman Banerjee}
\author[2]{Rogers Mathew}
\author[2]{Fahad Panolan}
\affil[1]
{
Indian Institute of Technology Jammu, \authorcr 
Jammu \& Kashmir 181221, India. \authorcr
suman.banerjee@iitjammu.ac.in
}
\affil[2]
{
Department of Computer Science and Engineering,\authorcr
Indian Institute of Technology Hyderabad, India.\authorcr
\{rogers, fahad\}@cse.iith.ac.in
}


\maketitle              

\begin{abstract}
Diffusion is a natural phenomenon in many real-world networks. Spreading of ideas, rumours in an online social network; propagation of virus, wormhole in a computer network; spreading of diseases in a human contact network, etc. are some real-world examples of this.  Diffusion  often starts from a set of initial nodes known as \emph{seed nodes}. A node can be in any one of the following two states: \emph{influenced} (\emph{active}) or \emph{not influenced} (\emph{inactive}). We assume that a node can change its state from inactive to active, however, not vice versa. Only the seed nodes are active initially and the information is dissipated from these seed nodes in discrete time steps. Each node $v$ is associated with a threshold value $\tau(v)$ which is a positive integer. A node $v$ will be influenced at time step $t$, if there are at least $\tau(v)$ number of nodes in its neighborhood which have been activated on or before time step $t-1$. The diffusion stops when no more node-activation is possible. 

Given a simple, undirected graph $G$ with a threshold function $\tau:V(G) \rightarrow \mathbb{N}$, the \textsc{Target Set Selection} (TSS) problem is about choosing a minimum cardinality set, say $S \subseteq V(G)$, such that starting a  diffusion process with $S$ as its seed set will eventually result in activating all the nodes in $G$. For any non-negative integer $i$, we say a set $T\subseteq V(G)$ is a \emph{degree-$i$ modulator} of $G$ if the degree of any vertex in the graph $G-T$ is at most $i$. Degree-$0$ modulators of a graph are precisely its vertex covers. Consider a graph $G$ on $n$ vertices and $m$ edges. We have the following results on the TSS problem:
\begin{itemize}
\item It was shown by Nichterlein et al. [Social Network Analysis and Mining, 2013] that it is possible to compute an optimal-sized target set in $O(2^{(2^{t}+1)t}\cdot m)$ time, where $t$ denotes the  cardinality of a minimum degree-$0$ modulator of $G$. We improve this result by designing an algorithm running in time 
$2^{O(t\log t)}n^{O(1)}$.

\item We design a $2^{2^{O(t)}}n^{O(1)}$ time algorithm to compute an optimal target set for $G$, where $t$ is  the size of a minimum degree-$1$ modulator of $G$.  
\end{itemize}
\end{abstract}

\section{Introduction}
Diffusion is a natural phenomenon in many real\mbox{-}world networks such as diffusion of information, innovation, ideas, rumors in an online social network 
\cite{bakshy2012role};
propagation of virus, wormhole in a computer network \cite{hu2006wormhole}; spreading of contaminating diseases in a human contact network \cite{salathe2010high}, and many more. Depending on the situation, we want to maximize/minimize the spread. For example, in the case of propagation of information in a social network, sometimes we want to maximize the spread so that a large number of people are aware of the piece of information. On the other hand, in the case of spreading of contaminating diseases, we would want to minimize the spread. In this paper, the practical essence of our study is in and around the first situation. 
\par Diffusion starts from a set of initial nodes known as \emph{seed nodes}. A node can be in any one of the following two states: \emph{influenced} (also known as \emph{active}) or \emph{not influenced} (also known as \emph{inactive}). We assume that a node can change its state from inactive to active, however, not vice versa. Only the seed nodes are active initially and the information is disseminated in discrete time steps from these seed nodes. Each node $v$ is associated with a threshold value $\tau(v)$ which is a positive integer. A node $v$ will be influenced at time step $t$, if it has at least $\tau(v)$ number of nodes in its neighborhood which have been activated on or before time step $(t-1)$. The diffusion process stops when no more node\mbox{-}activation is possible. A set of seed nodes is called a \emph{target set} if diffusion starting from these seed nodes spreads to the entire network thereby influencing every node. 

\medskip
\noindent
{\bf Problem definition.}
In our study, we assume that the social network is represented by an undirected graph 
$G$, where $V(G)$ and $E(G)$ are the set of vertices and edges of $G$, respectively, and there is a threshold function $\tau:V(G)\rightarrow \mathbb{N}$ that assigns a threshold value to each node. Let $S \subseteq V(G)$ be a set of seed nodes from where diffusion starts. As described in the above paragraph, influence propagates in discrete time steps, i.e., $\mathcal{A}[S,0] \subseteq \mathcal{A}[S,1] \subseteq  \mathcal{A}[S,2] \subseteq \dots \subseteq \mathcal{A}[S,i] \subseteq \dots \subseteq V(G)$, where $\mathcal{A}[S,i]$ denotes the set of nodes that has been influenced on or before the $i^{th}$ time stamp and $\mathcal{A}[S,0]= S$. For all $i>0$, the diffusion process can be expressed by the following equation:
\begin{center}
$\mathcal{A}[S,i]=\mathcal{A}[S,i-1] \cup \{u ~:~ \mid N(u) \cap \mathcal{A}[S,i-1] \mid \geq \tau(u)  \}$,
\end{center} where $N(u)$ denotes the set of neighbors of $u$. 
For any seed set $S$, we define ${\sf influence}_G(S) := \underset{t \geq 0} {\bigcup} \mathcal{A}[S,t]$. Observe that $t\leq |V(G)|$ as at least one new node is activated in every time step; else the diffusion process stops. The {\sc Target Set Selection (TSS)} problem is about finding a minimum cardinality target set. In other words, it is about finding a minimum cardinality seed set $S$ such that ${\sf influence}_G(S) = V(G)$.   

\medskip
\noindent
{\bf Related work.}
Chen  \cite{chen2009approximability} showed that TSS cannot be approximated within a factor of $O(2^{\log^{1-\epsilon}n})$ of the optimum for a fixed constant $\epsilon > 0$, unless ${NP} \subseteq {DTIME}(n ^{polylog(n)})$, by a reduction from the MINREP Problem. They also showed that TSS is NP\mbox{-}hard for bounded\mbox{-}degree bipartite graphs with a threshold value not greater than 2 at each vertex by a reduction from a variant of the $3$-SAT problem. For trees, they proposed a polynomial\mbox{-}time exact algorithm. 
Chiang et al.~\cite{chiang2013some} showed that TSS  can be solved in linear time for block\mbox{-}cactus graphs with an arbitrary threshold and for chordal graph with threshold at most 2. Cicalese et al. \cite{cicalese2014latency} proposed exact algorithms to find minimum cardinality target set for bounded clique width and trees.
Chopin et al. \cite{chopin2014constant} showed that upper bounding the threshold to a constant leads to efficiently solvable instances of the TSS problem under the parameterized complexity theoretic framework. They showed that TSS  is W[1]\mbox{-}hard with respect to  the parameters \emph{feedback vertex cover}, \emph{distance to co\mbox{-}graph}, \emph{distance to interval graph}, \emph{pathwidth}, \emph{cluster vertex deletion number}, W[2]-hard with respect to the parameter \emph{seed set cardinality}, and fixed parameter tractable with respect to the parameters \emph{distance to clique} and \emph{bandwidth}. Dvo{\v{r}}{\'a}k  et al. \cite{DBLP:conf/isaac/DvorakKT18} added a few more results in the parameterized setting. They showed that TSS  is W[1]-hard with respect to parameter \emph{neighborhood diversity}, and under majority threshold (i.e., the threshold of each vertex is equal to half of its degree) this problem has an FPT algorithm with respect to the parameters \emph{neighborhood diversity}, \emph{twin cover}, \emph{modular width}.  Bazgan et al. \cite{bazgan2014parameterized} showed that for any functions $f$ and $\rho$ this problem cannot be approximated within a factor of  $\rho(k)$ in $f(k) \cdot n^{O(1)}$ time unless $FPT=W[P]$ even for constant and majority thresholds, where $k$ denotes the cardinality of an optimal target set. Nichterlein et al. \cite{nichterlein2013tractable} showed that for diameter two split graphs TSS  remains W[2]\mbox{-}hard with respect to the parameter \emph{size of the target set}. Also, TSS  is fixed parameter tractable when parameterized by the {\em vertex cover number} and {\em cluster editing number}. Hartmann \cite{hartmann2018target} showed that the TSS Problem is FPT when parameterized with the combined parameters \emph{clique-width} and \emph{maximum threshold value} of the input graph. 
Notice that {clique-width} is smaller than many parameters like vertex cover number and treewidth. 
Bliznets et al. \cite{bliznets_et_al:LIPIcs:2019:10223} presented several faster\mbox{-}than\mbox{-}trivial algorithms under several threshold models such as constant thresholds, dual constant thresholds where the threshold value of each vertex is bounded by one third of its degree. Recently, Keiler et al. \cite{keiler2020target} studied a variant of TSS Problem where the goal is to select a target set that maximizes the diffusion rounds. They showed that the problem is FPT when parameterized by diffusion rounds and maximum threshold if the input graph has bounded local treewidth. The problem is NP-Complete even when we need the diffusion rounds to be at least $4$ and maximum threshold is equal to 2. They also showed that the problem is NP-Hard for planar graphs, $W[1]$-Hard parameterized by treewidth, and polynomial-time solvable on trees.

\medskip
\noindent
{\bf Our contribution.}
For any non-negative integer $i$, we say that a set $T\subseteq V(G)$ is a \emph{degree-$i$ modulator} of $G$ if the degree of any vertex in the graph $G-T$ is at most $i$. Degree-$0$ modulators of a graph are precisely its \emph{vertex covers}. Consider a graph $G$ on $n$ vertices and $m$ edges. We have the following results on TSS. 
\begin{itemize}
\item Nichterlein et al. [Social Network Analysis and Mining, 2013] showed 
that it is possible to compute an optimal-sized target set in $O(2^{(2^{t}+1)t}\cdot m)$ time, where $t$ denotes the  cardinality of a minimum degree-$0$ modulator (i.e., vertex cover) of $G$. In Section \ref{Sec:3}, we improve this result by designing an algorithm that computes an optimal-sized target set in  $2^{O(t\log t)}n^{O(1)}$ time.

\item In Section \ref{Sec:D1M}, we design a $2^{2^{O(t)}}n^{O(1)}$ time algorithm to compute an optimal target set, where $t$ is the size of a minimum degree-$1$ modulator of $G$.  

\end{itemize}

\smallskip
\noindent
{\bf Lower bounds.} 
%
%
Ben-Zwi et al. \cite{ben2011treewidth} proved algorithmic lower bounds for {\sc TSS} when parameterized by the treewidth of the input graph. Towards that, in Lemma  4.3, the authors give a polynomial-time reduction that maps an instance $(G,k)$ of the \textsc{Multi-Colored Clique} problem\footnote{Given a set of $k$ distinct colors and a graph on $n$ vertices whose every vertex is colored with one of the $k$ colors, the \textsc{Multi-Colored Clique} problem is about finding a clique of size $k$ in the graph such that no two vertices of the clique are of the same color.} to an instance $(G', \tau, k')$ of the \textsc{Target Set Selection} (\textsc{TSS}) problem, where $k' = k + {k \choose 2}$, such that $G$ has a multicolored clique of size $k$ if and only if $G'$  has a target set of size $k'$. The \emph{star-deletion number} (minimum number of vertices who removal results in a graph that is a disjoint collection of star graphs) of the graph $G'$ constructed  is $O(k^2)$. Thus, the treewidth and tree-depth  of $G'$ is $O(k^2)$. This helps Ben-Zwi et al. \cite{ben2011treewidth} in proving that for a graph on $n$ vertices  of treewidth (or tree-depth, or star-deletion number) $s$,  the \textsc{TSS} problem cannot be solved in $f(s)n^{o(\sqrt{s})}$ time unless 
the Exponential Time Hypothesis (ETH) fails. 
For definitions of \emph{treewidth} and \emph{tree-depth}, see \cite{cygan2015parameterized}.

It was shown in \cite[Corollary 6.3]{marx2007can} that if the \textsc{Partitioned Subgraph Isomorphism} (\textsc{PSI}) problem\footnote{In {\sc PSI} we are given two graphs $G$ and $H$, a bijection $f_{G}:V(G) \longrightarrow [\ell]$, and a function $c_{H}:V(H) \longrightarrow [\ell]$, where $|V(G)|=\ell$. The objective is to test the existence of a  subgraph isomorphism $\phi$ from $G$ to $H$ such that for all $v \in V(G)$, $f_{G}(v)=c_{H}(\phi(v))$.} can be solved in $f(G)n^{o(\frac{k}{\log k})}$ time, then ETH fails, where $f$ is an arbitrary function, $n=|V(H)|$, $G$ is connected, and $k$ is the number of edges of the smaller graph $G$. 
%
%
We observe that by making small modifications to the reduction given in \cite{ben2011treewidth}, we can give a polynomial-time reduction from an instance $(G, H, f_G, C_H)$ of the PSI problem, where $|V(G)|=\ell, |E(G)|=k$, and $|V(H)| = n$, to an instance $(G', \tau, k')$ of the \textsc{TSS} problem, where $k' = \ell + k$, such that $(G, H, f_G, C_H)$ is a YES-instance of the PSI problem if and only if $G'$ has a target set of size $k'$. The star-deletion number of the graph $G'$ constructed can be shown to be $O(k)$. This helps us in improving the lower bound given in \cite{ben2011treewidth}. We can show that for a graph on $n$ vertices  of treewidth (or tree-depth, or star-deletion number) $s$,  the \textsc{Target Set Selection} problem cannot be solved in $f(s)n^{o(\frac{s}{\log s})}$ time unless ETH is false. The reduction is similar to the one in \cite{ben2011treewidth}, where we use {\sc PSI} instead of {\sc Multi-Colored Clique}. Hence we omit 
the proof here.

\subsection{Preliminaries}
Throughout the paper, we consider finite, undirected and simple graphs. For any vertex $v$ in a graph $G$, we shall use $d_G(v)$ to denote the number of edges incident on $v$ and $N_G(v)$ to denote the set of vertices adjacent to $v$. 
We omit the subscript $G$ if the graph $G$ is clear from the context. 
For any subset, say $S$, of the set of vertices of the graph under consideration, we shall use $N_S(v)$ to denote the set of neighbors of $v$ in $S$. For a graph $G$, we shall use $V(G)$ and $E(G)$ to denote its vertex set and edge set, respectively. For a set $S \subseteq V(G)$, we shall use (i) $G[S]$ to denote the subgraph induced by $S$ on $G$, and (ii) $G-S$ to denote the graph $G[V(G)\setminus S]$. 
A set $S\subseteq V(G)$ is said to be an \emph{independent set} of $G$ if no two vertices in $S$ are adjacent with each other. A set $U \subseteq V(G)$ is a {\em vertex cover} of $G$, if $V(G)\setminus U$ is an independent set. 
A vertex cover with minimum cardinality is called an optimal vertex cover. 
For a function $f \colon X \mapsto Y$, $X'\subseteq X$ and $Y'\subseteq Y$, $f(X')=\{f(x)~\colon~x\in X'\}$ and 
$f^{-1}(Y')=\{x\in X~\colon~f(x)\in Y'\}$.


\section{FPT algorithm  parameterized by degree-0 modulator (vertex cover number)} \label{Sec:3}
Consider the target set selection problem on a graph $G$ with a threshold function $\tau:V(G) \rightarrow \mathbb{N}$.  Clearly, all the vertices $v$ having $\tau(v) > d_G(v)$ are present in every feasible target set. So while designing an algorithm to find an optimal target set in $G$, it is safe to include all such vertices into the solution set to be constructed. Throughout this section we therefore assume that the graph $G$ under consideration has $\tau(v) \leq d_G(v)$, for all $v \in V(G)$. 
We begin by stating the following easy-to-see remark. 
\begin{remark} \label{rem:vertex_cover_target_set}
Let $C$ be a vertex cover of a graph $G$ with a threshold function $\tau:V(G) \rightarrow \mathbb{N}$. Then, $C$ is a target set in $G$. 
\end{remark}

\begin{lemma}
\label{Lemma:2}
Let $t$ be the size of an optimal vertex cover in a graph $G$ with a threshold function $\tau:V(G) \rightarrow \mathbb{N}$. Then, the diffusion process starting from any non-empty seed set $S \subseteq V$ terminates in at most $2t$ rounds. 
\end{lemma}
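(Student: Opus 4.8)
The plan is to exploit the structure imposed by a small vertex cover $C$ of size $t$, and to track how the diffusion process "alternates" between activating vertices of $C$ and vertices of the independent set $I = V(G) \setminus C$. The key observation is that once a round activates no new vertex of $C$, the entire remaining progress must be confined to $I$; and since $I$ is independent, each vertex of $I$ has all its neighbours inside $C$, so whether a vertex $u \in I$ ever gets activated depends only on how much of $N(u) \subseteq C$ is active — which is monotone. So I would argue that after the set of active cover-vertices stabilizes, at most one further round can add anything at all.

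Concretely, let $S$ be the seed set, and consider the sequence $C_i := \mathcal{A}[S,i] \cap C$. This is a non-decreasing sequence of subsets of $C$, so it strictly increases at most $|C| = t$ times before stabilizing; say it stabilizes at round $r \le t$ (meaning $C_r = C_{r+1} = \cdots$). The first step is to show $r \le t$, which is immediate from $|C| \le t$ together with the fact that if $C_{i} = C_{i+1}$ for some $i$ then, because every vertex of $I$ sees only vertices of $C$ and $\mathcal{A}[S,i+1] \cap C = \mathcal{A}[S,i] \cap C$ already, the set of cover-vertices never grows again (an inductive argument: the activation condition for a cover vertex $w$ at round $j+1$ depends on $N(w) \cap \mathcal{A}[S,j]$, and once both $C_j$ and $I_j$ have stabilized nothing changes). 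I would be slightly careful here: strictly, $C_i = C_{i+1}$ does not by itself freeze $C$ forever, because a cover vertex could still be waiting on independent-set neighbours that get activated later — so the cleaner route is: $C_i$ is non-decreasing, hence it has at most $t$ strict increases, so there is some round $r \le t$ with $C_{r-1} = C_r$ (if $S \cap C \neq \emptyset$, even fewer; in the worst case $r = t$, and we should also allow the degenerate possibility that $C$ fills up completely by round $t$).

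The second step handles what happens after $C$ stabilizes. Suppose $C_{r} = C_{r+1}$. I claim that for every vertex $u \in I$, if $u \notin \mathcal{A}[S, r+1]$ then $u \notin \mathcal{A}[S, j]$ for all $j$: indeed $u$'s activation depends solely on $|N(u) \cap \mathcal{A}[S, j-1]| = |N(u) \cap C_{j-1}|$, and for $j-1 \ge r$ this count equals $|N(u) \cap C_r|$, which is $< \tau(u)$ since $u$ was not activated by round $r+1$. Hence $\mathcal{A}[S, r+1] = \mathcal{A}[S, r+2] = \cdots$, i.e. the process has terminated by round $r+1$. Combining, the process terminates within $r+1 \le t+1$ rounds in the cleanest accounting. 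To reach the stated bound of $2t$ with room to spare, note that in fact one can be even more generous: tracking $C_i$ and $I_i$ together, the pair $(|C_i|, |I_i|)$ is non-decreasing and the diffusion in each round either grows $C$ or (after $C$ freezes) performs exactly one last growth of $I$, so $2t$ is a comfortable over-estimate of $t+1$.

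The main obstacle is the subtlety I flagged: the naive claim "$C_i = C_{i+1}$ implies $C$ is frozen forever" is false in general because cover vertices may depend on independent-set vertices that activate in later rounds. The fix is to not argue round-by-round stabilization of $C$ directly, but to use only the monotonicity of $C_i$ (which gives $\le t$ strict increases, hence a stabilization round $r \le t$) and then run the clean argument above from round $r$ onward for the independent set; this avoids the false step entirely while still yielding a bound well below $2t$.
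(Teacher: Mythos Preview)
You correctly flag the crucial subtlety: a single plateau $C_i = C_{i+1}$ need not freeze $C$ forever, since independent-set vertices newly activated at round $i+1$ can trigger new cover vertices at round $i+2$. Unfortunately your proposed fix repeats the error. You assert that monotonicity of $C_i$ ``gives $\le t$ strict increases, hence a stabilization round $r \le t$'', but this inference is invalid: a non-decreasing sequence in $\{0,\ldots,t\}$ may have its (at most $t$) strict increases separated by plateaus, so having $\le t$ increases does not force stabilization by round $t$. Your second step then quietly assumes full stabilization (``for $j-1 \ge r$ this count equals $|N(u)\cap C_r|$''), which is precisely the claim you warned against. A concrete counterexample to the resulting $t{+}1$ bound: take the path $b_0\,c_1\,b_1\,c_2\cdots b_{t-1}\,c_t$ with all thresholds equal to $1$, vertex cover $C=\{c_1,\ldots,c_t\}$ of size $t$, and seed $S=\{b_0\}$. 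Diffusion proceeds one vertex per round and terminates only at round $2t-1$; here $C_i$ grows at every odd round and plateaus at every even round, so $C$ does not stabilize until round $2t-1$.

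The missing ingredient is the paper's key observation: two consecutive rounds cannot both add only $B$-vertices. Indeed, if $\emptyset\neq S_{i+1}\subseteq B$ and $u\in S_{i+1}$, then $N(u)\subseteq C$ and $|N(u)\cap C_i|\ge\tau(u)>|N(u)\cap C_{i-1}|$, so $C_i\supsetneq C_{i-1}$ and hence $S_i\cap C\neq\emptyset$. This caps the number of plateau rounds (at most one between consecutive growths of $C$), and the $2t$ bound follows directly.
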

\begin{proof}
Let $C$ be an optimal vertex cover of $G$ of size $t$ and let $B=V(G)\setminus C$. We know that $B$ is an independent set. Let $S_i$ be the set of uninfluenced nodes that were influenced in Round $i$ of the diffusion process. We have $S_0 = S$. Assume the diffusion process terminates in $k$ rounds. For each $0 \leq i \leq k$, observe that $S_i$ is a non-empty set. For $0 \leq i < k$, since $B$ is an independent set, it is not possible to have both $S_i$ and $S_{i+1}$ to be subsets of $B$. Thus, in every two consecutive rounds, at least one uninfluenced vertex from $C$ will be influenced. This implies that $C\cap {\sf influence}_G(S)$ will be influenced in at most $2t-1$ steps. 
Therefore, the diffusion process will terminate in at most $2t$ steps. 
\end{proof}

TSS when parameterized by degree-$i$ modulator is defined below.

\defparproblem{\textsc{TSS-D$i$M}}{An undirected graph $G$ on $n$ vertices, a threshold function $\tau: V(G) \mapsto \mathbb{N}$, $k \in \mathbb{N}$, and a degree-$i$ modulator $C$ of size $t$.}{t}{Is there a target set of size $k$?}

Recall that \textsc{TSS-D$0$M} is the TSS problem parameterized by vertex cover. 
Toward getting our FPT algorithm, 
we give a Turing reduction from \textsc{TSS-D$0$M} 
to a variant of the hitting set problem  
which is defined below. 

%

\defparproblem{\textsc{Multi\mbox{-}Hitting Set}}{A universe $U$, where $|U| \leq n$, a collection of subsets $S_1, S_2, \ldots, S_t \subseteq U$ and $q, l_1, l_2, \ldots, l_t \in \mathbb{N}$ such that $l_j \leq t$, for all $j \in [t]$.}{t}{Is there a subset $H \subseteq U$, such that $|H| \leq q$ and $|H \cap S_i| \geq l_i$, for all  $i \in [t]$.}

%



Now, we describe our Turing reduction that constructs $2^{O(t\log t)}$ instances of \textsc{Multi\mbox{-}Hitting Set} from a given instance of {\sc TSS-D$0$M}.

\begin{theorem} \label{thm:TSS-MHS}
There is an algorithm that given an instance $(G, \tau, k, C)$ of  {\sc TSS-D$0$M}, where $t=\vert C\vert$, runs in $2^{O(t \log t)}\cdot n^{O(1)}$ time, and outputs a collection of instances $\mathcal{I}=\{I^j = (V(G) \setminus C, q^j, S_{1}^{j}, S_{2}^{j}, \ldots, S_{t}^{j},l_{1}^{j}, l_{2}^{j}, \ldots, l_{t}^{j}): j \in [s]\}$ of {\sc Multi-Hitting Set} such that the following holds:
\begin{itemize}
\item (a) The number of instances, i.e., $s \leq 2^{O(t \log t)}$. 
\item (b) $(G,\tau, k,C)$ is a YES-instance of {\sc TSS-D$0$M} if and only if there exists $ j \in [s]$ such that 
$I^j$ is a YES-instance of \textsc{Multi\mbox{-}Hitting Set}.
\end{itemize}
\end{theorem}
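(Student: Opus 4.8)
The plan is to guess, over all $2^{O(t\log t)}$ possibilities, the entire "behaviour'' of an optimal target set restricted to the vertex cover $C$, and then encode the remaining constraints — namely, how many seeds must be picked from the independent set $B = V(G)\setminus C$ — as a single instance of \textsc{Multi-Hitting Set}. Concretely, a target set $S$ partitions $C$ into the part $S\cap C$ that is seeded and the part $C\setminus (S\cap C)$ that must be activated by the diffusion. By Lemma~\ref{Lemma:2} the diffusion terminates in at most $2t$ rounds, and since $B$ is independent, in each round at least one \emph{new} vertex of $C$ gets activated in every two consecutive rounds, so there are at most $|C|=t$ distinct ``activation times'' among the vertices of $C$. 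Thus I would first guess: (i) the subset $S\cap C$ (at most $2^t$ choices); and (ii) a linear preorder / ranking on $C\setminus(S\cap C)$ describing the order in which these vertices become active, i.e.\ a function assigning each such vertex one of at most $t$ activation levels ($t^t = 2^{O(t\log t)}$ choices). Together this is $s\in 2^{O(t\log t)}$ guesses, giving part (a).

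For a fixed guess, the semantics are: a set $H\subseteq B$ of seeds from the independent set, together with $S\cap C$, realises exactly this activation pattern on $C$ if and only if for each vertex $c\in C$ that is not seeded, the number of its neighbours that are active strictly before $c$'s level is at least $\tau(c)$. The neighbours of $c$ are split into neighbours in $C$ (whose activation levels are fixed by the guess, so the count of those active before $c$ is a known constant $a_c$) and neighbours in $B$. For a seeded guess we should also be able to verify consistency purely among the $C$-vertices; whenever a $C$-vertex's requirement is already met by earlier $C$-vertices alone we impose no constraint, and whenever it is impossible to meet even using all of $B\cap N(c)$, this guess is discarded. Otherwise the surviving requirement on $H$ is that $|H\cap (B\cap N(c))| \ge \tau(c) - a_c =: l_c$, where $0\le l_c$; and I would also add the constraint that every vertex of $B$ not seeded is eventually activated — but since every vertex of $B$ has all its neighbours in $C$, and all of $C$ is active by the end, this is automatic provided $\tau(b)\le deg(b)$, which we assumed, so no extra sets are needed for $B$. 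Hence the \textsc{Multi-Hitting Set} instance has universe $U = V(G)\setminus C = B$ (with $|U|\le n$), one set $S^j_i = B\cap N(c_i)$ and bound $l^j_i = \max(0,\tau(c_i)-a_{c_i})$ for each $c_i\in C$ (pad to exactly $t$ sets with $l=0$ if $|C\setminus(S\cap C)|<t$, and note $l^j_i\le \tau(c_i)\le deg(c_i)$, but we additionally need $l^j_i\le t$; if some $l^j_i > t$ the guess is infeasible and discarded since $|H\cap S^j_i|\le |H|$ and a hitting set of size $>t$ can be shortcut — more carefully, we set $q^j = k - |S\cap C|$ and discard whenever some $l^j_i$ exceeds $q^j$), and budget $q^j = k-|S\cap C|$.

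For the correctness equivalence (part (b)): if $(G,\tau,k,C)$ is a YES-instance with target set $S^\star$ of size $\le k$, then — after the standard reduction where $S^\star$ is first shrunk so that $S^\star\cap B$ contains no vertex that would have been activated anyway, which does not increase its size — running the diffusion from $S^\star$ yields a concrete activation pattern on $C$; the guess $j$ matching $S^\star\cap C$ and that activation order satisfies all the derived inequalities, so $H = S^\star\cap B$ witnesses that $I^j$ is a YES-instance. Conversely, given a YES-instance $I^j$ with witness $H$, I claim $S := H \cup (S\cap C)$ (from guess $j$) is a target set of size $\le q^j + |S\cap C| \le k$: one shows by induction on the guessed levels that every $C$-vertex gets activated — its $C$-neighbours at lower levels plus its $H$-neighbours already total $\ge a_c + l^j_c \ge \tau(c)$ — and then all of $B$ activates since every $b\in B$ has $\tau(b)\le deg(b) = |N_C(b)|$ and all of $C$ is active. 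The running time is $2^{O(t\log t)}$ guesses times $n^{O(1)}$ work each, as required.

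The main obstacle I anticipate is making the ``activation order'' guess precise and finite: a naive guess of activation \emph{times} in $\{1,\dots,2t\}$ also works and is only $2^{O(t\log t)}$, but one must be careful that the guessed order is internally consistent (a vertex's level is exactly the first round its threshold is met, not merely some round after), and that padding/edge cases (vertices of $C$ that are seeds, vertices whose requirement is met by $C$ alone, the bound $l^j_i\le t$ demanded by the \textsc{Multi-Hitting Set} definition) are all handled so that infeasible guesses are cleanly discarded rather than producing malformed instances. The rest is bookkeeping.
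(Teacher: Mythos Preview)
Your backward direction (\textsc{Multi-Hitting Set} YES $\Rightarrow$ TSS YES) is fine, but the forward direction has a genuine gap. You set $l_c=\tau(c)-a_c$ where $a_c$ counts only the $C$-neighbours of $c$ at earlier levels, and take $S^j_c=B\cap N(c)$. This ignores the fact that \emph{non-seed} vertices of $B$ can become active strictly before $c$ does: every $b\in B$ has $N(b)\subseteq C$, so once enough lower-level $C$-vertices are active, $b$ activates and may then help activate $c$ without ever being in $H$. Concretely, take $C=\{c_0,c_1,c_2\}$, $B=\{b_1,b_2\}$ with edges $c_0b_1,\ c_1b_1,\ c_1b_2,\ c_2b_2$ (no edges inside $C$), and thresholds $\tau(c_0)=\tau(c_2)=\tau(b_1)=\tau(b_2)=1$, $\tau(c_1)=2$. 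Then $\{c_1\}$ is a target set of size~$1$ (it activates $b_1,b_2$ at time~$1$, which activate $c_0,c_2$ at time~$2$). But in your reduction every $c_i$ has $a_{c_i}=0$ regardless of the guessed levels, so: for any guess with $c_1\in S\cap C$ you have $q^j\le 0$ yet require $|H\cap\{b_1\}|\ge 1$ for $c_0$; for the guess $S\cap C=\emptyset$ you have $q^j=1$ yet require $|H\cap\{b_1,b_2\}|\ge 2$ for $c_1$; all other guesses have $q^j<0$ or an unmet constraint. No guess yields a YES-instance, so (b) fails.

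The paper closes exactly this gap: after guessing the time stamps $T^j(v_i)$ on $C$, it \emph{derives} a time $T^j(u)$ for every $u\in B$ (the first round at which $u$'s threshold is met given the $C$-times), and then sets
\[
l_i^j=\max\bigl\{0,\ \tau(v_i)-|\{w\in N(v_i):T^j(w)<T^j(v_i)\}|\bigr\},\qquad
S_i^j=\{w\in N_B(v_i):T^j(w)\ge T^j(v_i)\}.
\]
Thus both $C$- and $B$-neighbours that activate earlier are subtracted from the threshold, and only those $B$-neighbours that would \emph{not} activate before $v_i$ on their own are eligible to be forced into $H$. This derived-time step is the missing ingredient; once you incorporate it, the rest of your outline (the $2^{O(t\log t)}$ count, the induction on levels for the backward direction, and the discard rules) goes through essentially as you wrote.
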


\begin{proof}
First we describe the construction of $2^{O(t \log t)}$ many instances of {\sc Multi-Hitting Set} from a given instance $(G, \tau, k, C)$ of {\sc TSS-D$0$M}. 
\paragraph*{Construction.} \label{Cons:3}
Consider the given {\sc TSS-D$0$M} instance  $(G, \tau, k, C)$, where $G$ is a graph on $n$ vertices and $C=\{v_1, \ldots , v_t\}$ is a vertex cover of size $t$. Let $B=V(G) \setminus C$. Note that $B$ is an independent set. For each $v_i\in C$, we guess a time stamp $T(v_i)$ in which $v_i$ will be influenced. From Lemma \ref{Lemma:2}, we know that $T(v_i)\in \{0, \ldots , 2t\}$. There are $t$ vertices in $C$ and each one of them can be assigned any one of these $2t+1$ distinct values. So, there are $(2t+1)^t=2^{\mathcal{O}(t \log t)}$ possible guesses for the time stamps of the vertices in $C$. Now, among all the $2^{\mathcal{O}(t \log t)}$ possibilities, let us consider the $j$-th one. That is, consider that we are given a $t$-tuple $(T^j(v_1), T^j(v_2), \ldots , T^j(v_t))$ of guessed time stamps for vertices in $C$. Based on these guessed time stamps for the vertices in $C$, for any $u \in B$, we compute $T^j(u)$ as  
$1+\min \{x\in {\mathbb N} \colon \mbox{number of vertices in } N(u) \mbox{ with time stamp at most } x, \mbox{ is at least }\tau(u)\}.$
\\ For each $v_i \in C$, 
we define $l_i^j := \max \{0, \tau(v_i) - |\{w \in N(v_i):T^j(w)<T^j(v_i)\}|\}$ and $S^j_i := \{w \in N_B(v_i):T^j(w) \geq T^j(v_i)\}$. We remark that if $l_i^j\geq t$, then we will not include $I_j$ in the collection of the output instances. 
We thus have the $j$-th instance of the multi-hitting set problem where $U = B$, $S^j_i$'s and $l^j_i$'s are as defined above, $t=|C|$, and $q^j = k - |\{v_i \in C:T(v_i) = 0\}|$. This completes the construction of output instances. It is easy to verify that the number of instance in ${\cal I}$ is $2^{O(t\log t)}$. 
Property (b) follows from the following two claims. 

\begin{clm}
\label{clm:Construction_TSS_gives_hitting_set}
Suppose it is given that one of the $t$-tuples we guess, say the $j$-th $t$-tuple $(T^j(v_1), T^j(v_2), \ldots , T^j(v_t))$, happens to represent the activation time of vertices in $C$ corresponding to some feasible target set $S$. Then, for all $i \in [t], |S \cap S^j_i| \geq  l_i^j$. That is, $S \cap B$ is a solution for the  instance $I^j$ of \textsc{Multi\mbox{-}Hitting Set}. 
\end{clm}
\begin{proof}
For any $w\in B$, $N(w)\subseteq C$ and we know that for any $v_i\in C$, $v_i$ is influenced in step $T^j(v_i)$ for the target set $S$. This implies that for any $w\in B\setminus S$, $w$ is influenced in step $T^j(w)$. Therefore, as $v_i$ is influenced in step $T^j(v_i)$, at least $l_i^j$ vertices from $S_i^j$ should be there in the target set $S$. 
This implies that $S \cap B$ is a solution for the  instance $I^j$. 
 %
\end{proof}

\begin{clm}
\label{rem:Construction_hitting_set_gives_TSS}
Let $H$ be a hitting set for the instance $I^j$ of \textsc{Multi\mbox{-}Hitting Set}. Then $S=H \cup \{v_i \in C~:~T^j(v_i)=0\}$ is a target set for $G$. 
\end{clm}
\begin{proof}
To prove the claim it is enough to show that $C\subseteq {\sf influence}_G(S)$. We prove by induction on $q$ that all the vertices $v_i\in C$ with $T^j(v_i)\leq q$ will be influenced by the end of step $q$. 
The base case is when $q\leq 1$. 
Clearly, for $q=0$, all the vertices 
$v_i\in C$ with $T^j(v_i)\leq q$ is influenced initially because $v_i\in S$. Notice that for any $w\in B$, $T^j(w)>0$ and since $H$ is a solution for the instance $I^j$ of \textsc{Multi\mbox{-}Hitting Set}, for any $v_i\in C$ with $T^j(v_i)=1$, at least $l_i^j=\tau(v_i)$ vertices from $N(v_i)$ are there in $H$. This implies that for any $v_i\in C$ with $T^j(v_i)=1$, $v_i$ will be influenced in step 1. Now consider the induction step for which $q>1$.  
Now consider a vertex $v_i\in C$ with $T^j(v_i)=q$. We know that at least $l_i^j$ vertices from $S^j_i := \{w \in N_B(v_i)~:~T^j(w) \geq q\}$ are present in $H$ because $H$ is a solution to $I^j$. 
By the induction hypothesis, we have that for any vertex $v_r\in C$ with $T^j(v_r)=q-2$ is influenced at the end of step $q-2$. This implies that all the vertices in $\{w \in N_B(v_i)~:~T^j(w) < q\}$ are influenced by the end of step $q-1$. 
Therefore, at least $\tau(v_i)$ vertices from $N(u)$ will be influenced by the end of step $q-1$. This implies that $v_i$ will be influenced in step $q$. This completes the proof of the claim. 
\end{proof}
This completes the proof of the theorem.
\end{proof}

Next, we design an FPT algorithm for \textsc{Multi\mbox{-}Hitting Set}. 

\begin{theorem}
\label{thm:multi_hitting_set}
Given an instance $I= (U, q, S_{1}, S_{2}, \ldots, S_{t};l_{1}, l_{2}, \ldots, l_{t}) )$ of \textsc{Multi\mbox{-}Hitting Set}, there is an algorithm of running time $2^{O(t \log t)}\cdot n^{O(1)}$ to solve $I$, where $|U| = n$.  
\end{theorem}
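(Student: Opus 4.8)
The plan is to exploit the fact that the $t$ sets $S_1,\ldots,S_t$ partition the relevant part of $U$ into at most $2^t$ \emph{types}: call two elements $u,u'\in U$ equivalent if $\{i\in[t]:u\in S_i\}=\{i\in[t]:u'\in S_i\}$, and note that equivalent elements are interchangeable with respect to every constraint $|H\cap S_i|\ge l_i$. So a candidate hitting set is determined, up to the choice of representatives, by how many elements of each type it picks. Concretely, first I would discard every element lying in no $S_i$, then for each type $A\subseteq[t]$ that is actually realized compute its multiplicity $n_A:=|\{u\in U:\{i:u\in S_i\}=A\}|$ (there are at most $\min(2^t,n)$ realized types, and this takes $n^{O(1)}\cdot t$ time). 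The problem then becomes the integer program: choose $x_A\in\{0,1,\ldots,n_A\}$ for each type $A$ with $\sum_{A\ni i}x_A\ge l_i$ for all $i\in[t]$, minimizing $\sum_A x_A$; the given instance is a YES-instance iff this minimum is at most $q$.

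The key observation that makes this tractable is that we may additionally impose $x_A\le \max_{i\in A}l_i\le t$ for every type $A$ without changing the optimum. Indeed, given an optimal solution in which $x_A>m:=\max_{i\in A}l_i$ for some $A$, deleting $x_A-m$ elements of type $A$ leaves every constraint $i\in A$ with at least $m\ge l_i$ hits coming from type $A$ alone, and leaves every other constraint untouched, so the result is still feasible and strictly smaller---contradicting optimality. Hence some optimal solution has every coordinate at most $t$.

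With this bound in hand I would solve the integer program by a dynamic program over \emph{residual demand vectors}. A state is a vector $d=(d_1,\ldots,d_t)$ with $0\le d_i\le l_i$, and there are $\prod_{i=1}^{t}(l_i+1)\le (t+1)^t=2^{O(t\log t)}$ of them. Processing the realized types $A$ one at a time, I would maintain $D[d]=$ the minimum number of elements used so far among choices that bring the residual demand to $d$, initialized by $D[(l_1,\ldots,l_t)]=0$ and $D[d]=\infty$ otherwise; when a new type $A$ is processed, for each old state $d$ and each $x\in\{0,1,\ldots,\min(n_A,t)\}$ we relax the state obtained by replacing $d_i$ with $\max(d_i-x,0)$ for $i\in A$ (and keeping $d_i$ for $i\notin A$) to cost $D[d]+x$. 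After all types have been processed, the instance is a YES-instance iff $D[(0,\ldots,0)]\le q$, with $D[(0,\ldots,0)]=\infty$ signalling infeasibility. Each of the at most $2^t$ types triggers $2^{O(t\log t)}\cdot(t+1)$ constant-time relaxations, so the whole computation runs in $2^{O(t\log t)}\cdot n^{O(1)}$ time.

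The only genuinely substantive step is the exchange argument bounding each $x_A$ (equivalently, capping the residual demands at $l_i$), since this is exactly what keeps both the branching factor and the state space of the dynamic program within $2^{O(t\log t)}$; everything else is routine bookkeeping. I would also remark that, once the variables are known to be bounded, one could instead invoke an integer-programming-in-fixed-dimension routine on the $2^t$ bounded variables, but the direct dynamic program above is simpler and yields the claimed running time immediately.
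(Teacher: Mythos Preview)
Your argument is correct and rests on the same dynamic program as the paper's: in both cases the state is a residual-demand vector $(d_1,\ldots,d_t)$ with $0\le d_i\le l_i\le t$, giving at most $(t+1)^t=2^{O(t\log t)}$ states, and this bound on the state space is the whole point. The only difference is how the universe is swept. The paper simply processes the elements $u_1,\ldots,u_n$ one at a time, making a binary include/exclude decision at each step; the $n$ steps contribute only a polynomial factor, so no further argument is needed. You instead bucket $U$ into at most $2^t$ types and process types, which forces you to decide \emph{how many} elements of a type to take and hence to prove the exchange lemma capping $x_A\le\max_{i\in A}l_i\le t$. That lemma is correct, but it is extra work your route creates for itself: the element-by-element DP avoids it entirely. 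Your final remark about invoking integer programming in fixed dimension would also work but is not what the paper does.
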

\begin{proof}
Let $U = \{u_1, \ldots , u_n\}$ and let $U_j = \{u_1, \ldots , u_j\},~1\leq j \leq n$. 
We design a dynamic programming algorithm, where in the DP table entry $D_j(q', l_1', l_2', \ldots , l_t')$ we store a hitting set (if one exists; else, it will be equal to NULL) of size at most $q'$ that is a subset of $U_j$ and hits each $S_i$ on at least $l_i'$ elements, where $0 \leq l_i'\leq l_i$, for all $i \in [t]$ and $0 \leq q'\leq q$. 
The case when $j=0$ can be computed easily as follows. 

$$
D_0(q', l_1', l_2', \ldots , l_t') = \left\{ \begin{array}{rl}
\emptyset & \mbox{if }  l_i'=0 \mbox{ for all } i,\\ 
NULL & \mbox{otherwise}. 
\end{array}\right.
$$

We compute the DP table entries in increasing order of $j$. Consider the case when $j \geq 1$. Without loss of generality, assume $u_{j} \in S_1\cap \cdots \cap S_k$ and $u_{j} \notin S_{k+1} \cup \cdots \cup S_t$. Then, 
for any values of $q', l_1', l_2', \ldots , l_t'$ such that $0 \leq l_i'\leq l_i$, for all $i \in [t]$ and $0 \leq q'\leq q$, 
we compute $D_{j}(q', l_1', l_2', \ldots , l_t')$  as follows.  
If $D_{j-1}(q', l_1', l_2', \ldots , l_t')\neq NULL$, then 
$$D_{j}(q', l_1', l_2', \ldots , l_t') = D_{j-1}(q', l_1', l_2', \ldots , l_t').$$
If $D_{j-1}(q'-1, l_1'-1, l_2'-1, \ldots , l_k'-1, l_{k+1}', \ldots , l_t')\neq NULL$, then set 
$$D_{j}(q', l_1', l_2', \ldots , l_t') = D_{j-1}(q'-1, l_1'-1, l_2'-1, \ldots , l_k'-1, l_{k+1}', \ldots , l_t') \cup \{u_j\}.$$
Otherwise, we set 
$D_{j}(q', l_1', l_2', \ldots , l_t') = \mbox{ NULL}$. Using this dynamic programming approach, we eventually compute $D_n(q,l_1, \ldots , l_t)$. Since each of $q, l_1, \ldots , l_t$ is at most $t$, we can compute this in time $(t+1)^t n^{O(1)} = 2^{O(t\log t)}\cdot n^{O(1)}$.   

Next we prove the correctness of our algorithm. Towards that we claim that $D_n(q,l_1, \ldots , l_t) = \mbox{ NULL}$ if and only if $I$ is a NO-instance. We prove this by proving a more general statement. We will show that, for every $j \in \{0,\ldots, n\}$, $D_j(q',l_1', \ldots , l_t') = \mbox{ NULL}$ if and only if $I_j(q',l_1', \ldots , l_t') := (U_j, q', S_{1,j}, \ldots , S_{t,j}; l_1', \ldots , l_t')$ is a NO-instance where $S_{i,j} = S_i \cap U_j$, $l_i' \leq l_i$, for every $i$, and $0 \leq q' \leq q$. We prove this by strong induction on $j$. It is easy to see that the statement is true for the base case  when $j=0$. 
Consider the induction step when $j>0$. Notice that, by the induction hypothesis, the statement is true for all $j'<j$. Without loss of generality,  assume $u_{j} \in S_1\cap \cdots \cap S_k$ and $u_{j} \notin S_{k+1} \cup \cdots \cup S_t$. Suppose $D_j(q',l_1', \ldots , l_t') = \mbox{ NULL}$. This implies, both $D_{j-1}(q'-1,l_1'-1, \ldots , l_k'-1, l_{k+1}', \ldots , l_t') = \mbox{ NULL}$ and $D_{j-1}(q',l_1', \ldots , l_t') = \mbox{ NULL}$. Thus, by induction hypothesis, 
the instances 
$I_{j-1}(q',l_1', \ldots , l_t')$ and  
$I_{j-1}(q'-1,l_1'-1, \ldots,l_k'-1,\ldots, l_t')$ are NO-instances. 
Hence, $I_{j}(q',l_1', \ldots , l_t')$ is a NO-instance. 

Now, to prove the reverse direction of the bidirectional statement, assume that $I_{j}(q',l_1', \ldots , l_t')$ is a NO-instance. Then, clearly $(U_{j-1}, q'-1, S_{1,j-1}, \ldots, S_{t,j-1}; l_1'-1, \ldots l_k'-1, l_{k+1}', \ldots, l_t')$ and $(U_{j-1}, q', S_{1,j-1}, \ldots, S_{t,j-1}; l_1', \ldots , l_t')$ are NO-instances, where 
$S_{i,j-1} = S_i \cap U_{j-1}$ for every $1\leq i\leq t$. Therefore, both $D_{j-1}(q'-1,l_1'-1, \ldots , l_k'-1, l_{k+1}', \ldots , l_t') = NULL$ and $D_{j-1}(q',l_1', \ldots , l_t') = \mbox{ NULL}$. Hence,  $D_j(q',l_1', \ldots , l_t') = \mbox{ NULL}$. This proves the theorem. 
\end{proof}

Below we state the main result of this section which follows directly from Theorems \ref{thm:TSS-MHS} and \ref{thm:multi_hitting_set} and from the fact that a polynomial-time $2$-factor approximation algorithm exists for computing a minimum vertex cover in a graph. 
\begin{theorem}
\label{thm:fpt_main_result}
Let $G$ be a graph on $n$ vertices with a threshold function $\tau:V(G)\rightarrow \mathbb{N}$ defined on its vertices. Let $t$ be the size of an optimal vertex cover in $G$. Then, the optimal target set for $G$ can be computed in time $2^{O(t \log t)}n^{O(1)}$.   
\end{theorem}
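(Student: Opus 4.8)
The plan is to derive Theorem~\ref{thm:fpt_main_result} as a near-immediate consequence of the two preceding theorems, together with the standard $2$-approximation for \textsc{Vertex Cover}. First I would handle the preprocessing step already fixed at the start of the section: every vertex $v$ with $\tau(v) > \deg(v)$ must lie in every feasible target set, so I would add all such vertices to the solution, delete them from the graph, and decrease the thresholds of their neighbours accordingly; after exhaustively applying this, the reduced graph satisfies $\tau(v) \le \deg(v)$ for all $v$, and an optimal target set of the reduced instance plus the forced vertices is an optimal target set of the original. This step is polynomial time.

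Next I would compute, in polynomial time, a vertex cover $C$ of the reduced graph $G$ with $|C| \le 2\cdot \mathrm{vc}(G)$ using the classical maximal-matching $2$-approximation; write $t' = |C| \le 2t$. Now I would invoke Theorem~\ref{thm:TSS-MHS}: for each candidate target-set size $k \in \{1,\dots,n\}$ (or, more cleanly, to find the optimum directly, binary-search on $k$), run the Turing reduction on the instance $(G,\tau,k,C)$ to produce the collection $\mathcal{I}$ of $2^{O(t'\log t')}$ \textsc{Multi-Hitting Set} instances, decide each one via Theorem~\ref{thm:multi_hitting_set} in $2^{O(t'\log t')} n^{O(1)}$ time, and answer YES iff some $I^j$ is a YES-instance. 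By property~(b) of Theorem~\ref{thm:TSS-MHS} this correctly decides whether $G$ has a target set of size $k$; taking the smallest such $k$ gives the optimum value, and the actual optimal set is recovered from the hitting set $H^j$ realizing it via $S = H^j \cup \{v_i \in C : T^j(v_i)=0\}$ as in Claim~\ref{rem:Construction_hitting_set_gives_TSS}. Since $t' \le 2t$ we have $2^{O(t'\log t')} = 2^{O(t\log t)}$, and multiplying by the $O(n)$ (or $O(\log n)$ with binary search) choices of $k$ only changes the $n^{O(1)}$ factor, so the total running time is $2^{O(t\log t)} n^{O(1)}$.

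The only genuinely delicate point I would be careful about is that Theorems~\ref{thm:TSS-MHS} and~\ref{thm:multi_hitting_set} are stated with the parameter being the size of the \emph{given} vertex cover $C$, not the optimal vertex cover number; the bridge is simply the observation that the $2$-approximate cover has size $O(t)$ and the function $x \mapsto 2^{O(x\log x)}$ is closed under constant-factor scaling of its argument, so nothing is lost. A second, very minor, point is making sure the preprocessing for $\tau(v) > \deg(v)$ interacts correctly with the size bound $k$ (one subtracts the number of forced vertices), but this is bookkeeping. I do not anticipate any real obstacle here — the theorem is a corollary, and the entire content lies in the two earlier theorems.
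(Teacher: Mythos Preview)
Your proposal is correct and follows exactly the route the paper takes: the theorem is stated as an immediate corollary of Theorems~\ref{thm:TSS-MHS} and~\ref{thm:multi_hitting_set} together with the polynomial-time $2$-approximation for minimum vertex cover, and your additional bookkeeping (the $\tau(v)>\deg(v)$ preprocessing, the loop over $k$, and the recovery of the actual set via Claim~\ref{rem:Construction_hitting_set_gives_TSS}) just makes explicit what the paper leaves implicit.
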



\section{FPT algorithm  parameterized by degree-1 modulator} \label{Sec:D1M}

In this section we prove the following theorem. 

\begin{theorem} \label{thm:TSS-D1Mmain}
\TSSDOM\ is solvable in time $2^{2^{O(t)}}n^{O(1)}$.  
\end{theorem}

As there is a simple branching algorithm of running time $O(3^t (n+m))$ to compute a degree-1 modulator, given a graph $G$ with a threshold function $\tau$, we can obtain an optimum target set in time $2^{2^{O(t)}}n^{O(1)}$, where $t$ is the size of a minimum degree-1 modulator of $G$.

As before we assume that 
the threshold of any vertex is at most its degree and this can be achieved using simple reduction rules. 
Also we assume that the minimum degree of a vertex in $G$ is at least $2$.
If there is a degree $1$ vertex $v$ in $G$ and its threshold is equal to $1$, then there is an optimum solution excluding $v$. Because if there is a solution $S$ containing $v$, then $(S\setminus \{v\})\cup N_G(v)$ is also a solution. 
Moreover, any solution for $G-v$ is also a solution for $G$.  
That is, we can do a simple reduction rule where we delete $v$ from the graph.  

Recall that for a degree-1 modulator $S$ in a graph $G$,
$G-S$ is a disjoint union of isolated vertices and isolated edges.

\begin{lemma}
\label{Lemma:D1diffusiontime}
Let $t$ be the size of an optimal degree-1 modulator in a graph $G$ with a threshold function $\tau:V(G) \rightarrow \mathbb{N}$. Then, the `diffusion process' starting from any non-empty seed set $S \subseteq V$ terminates in at most $3t$ rounds. 
\end{lemma}
\begin{proof}
Let $C$ be an optimal degree-1 modulator of $G$ of size $t$ and let $B=V(G)\setminus C$. Notice that  $G[B]$ is a collection of isolated vertices and isolated edges. Let $S_i$ be the set of uninfluenced nodes that were influenced in Round $i$ of the diffusion process. We have $S_0 = S$. 
Assume the diffusion process terminates in $\ell$ rounds. For each $0 \leq i \leq \ell$, observe that $S_i$ is a non-empty set. For $0 \leq i \leq \ell-2$, since $G[B]$ is a collection of isolated vertices and isolated edges, it is not possible to have both $S_i$ and $S_{i+2}$ to be subsets of $B$. Thus, in every three consecutive rounds, at least one uninfluenced vertex from $C$ will be influenced. This implies that $C\cap {\sf influence}_G(S)$ will be influenced in at most $3t-1$ steps. Therefore, the diffusion process will terminate in at most $3t$ steps. 
This proves the lemma. 
\end{proof}

%

Recall that by our assumption the minimum degree of a vertex in $G$ is at least $2$ and for any vertex 
$v\in V(G)$, $\tau(v)\leq d_G(v)$. 
We encode \TSSDOM\ as an {\sc Integer Programming (IP)} problem. In {\sc IP}, we are given $m$ linear constraints over $n$ variables and we want to check whether there is an integer assignment to the variables such that all the constraints are satisfied. More formally, the input consists of an $m\times n$ matrix ${\bf A}$ and a $m$-length column vector ${\bf b}$, and the objective is to test whether there exists an $n$-length vector ${\bf x}$ with all integer coordinates such that ${\bf Ax}\leq {\bf b}$.  

By the famous result of Lenstra\cite{Lenstra}, we know that {\sc ILP} parameterized by the number of variables is FPT. The current best known running time for solving an {\sc ILP} with $n$ variables and $m$ clauses is $2^{O(n \log n)} m^{O(1)}$~\cite{KannanIP}. 

\begin{proposition}[\cite{KannanIP}]\label{prop:Kannan}
{\sc IP} is solvable in time $2^{O(n\log n)} L\log L$, where $n$ is the number of variables and $L$ is the  input length. 
\end{proposition}

We reduce \TSSDOM\ to many instance of {\sc IP} such that the number of variables in each instances of {\sc IP} is bounded by a function of $t$ and \TSSDOM\ is a YES-instance if and only if at least one of the instances of {\sc IP} is a YES-instance.  Before giving a Turing reduction from  \TSSDOM\ to {\sc IP}, we prove some results which we use in the reduction. 

\begin{observation}
\label{obs:fullthr}
Let $G$ be a graph and $\tau:V(G) \rightarrow \mathbb{N}$ be a threshold function. Let $\{u,v\}$ be an edge in the graph $G$ such that $\tau(u)=d_G(u)$ and $\tau(v)=d_G(v)$. Then, for any target set, at least one vertex from $\{u,v\}$ belongs to $S$. 
\end{observation}
\begin{proof}
Consider the set $Q=V(G)\setminus \{u,v\}$. Since $\tau(u)=d_G(u)$ and $\tau(v)=d_G(v)$, both $u$ and $v$ will get influenced only after all of their neighbors get influenced. This implies that $Q$ is not a target set. 
Hence any subset of $Q$ is not a target set as well.  Thus, at least one vertex from $\{u,v\}$ should belongs to $S$ for any target set $S$.  
\end{proof}

\begin{lemma}
\label{lem:timestapforlarge}
There is an algorithm that given a graph $G$, a threshold function, $\tau:V(G) \rightarrow \mathbb{N}$, a degree-1 modulator $C$ and a time stamp $T\colon C \mapsto \mathbb{N}\cup \{0\}$ of an {\em unknown} target set $S$ (a hypothetical solution), runs in linear time, and outputs two functions $g \colon V(G)\setminus C\mapsto \mathbb{N}$ and $h \colon V(G)\setminus C\mapsto \mathbb{N}$ with the following properties. Let $u\in V(G)\setminus (C\cup S)$ and if $d_{G-C}(u)=1$, then $v$ is the only neighbor of $u$ in $G-C$. (Notice that $d_{G-C}(u)\leq 1$). 
\begin{itemize} 
\item[(ii)] If $d_{G-C}(u)=0$, then $g(u)= h(u)$ and 
$u$ is influenced on step $g(u)$ in the diffusion process starting from the target set $S$. 
\item[(ii)] If $d_{G-C}(u)=1$ and $v\in S$, $g(u)\leq h(u)$ and $u$ is influenced on step $g(u)$ in the diffusion process starting from the target set $S$. 
\item[(iii)] If $d_{G-C}(u)=1$ and $v\notin S$, then $g(u)\leq h(u)$ and $u$ is influenced on step $h(u)$ in the diffusion process starting from the target set $S$.
%
\end{itemize}  
\end{lemma}
\begin{proof}
Recall that minimum degree of a vertex in $G$ is at least $2$. 
Notice that we are given a time stamp $T\colon C \mapsto \mathbb{N}\cup \{0\}$. Let $B=V(G)\setminus C$. 
First we define the function $g$ as follows. 
Fix a vertex $u\in B$. Suppose $d_{G-C}(u)=0$. Then, $g(u)$ be the smallest positive integer $i$ such that $\vert \bigcup_{j<i} (T^{-1}(j)\cap N_G(u))\vert \geq \tau(u)$. Now suppose $d_{G-C}(u)=1$. Then, $g(u)$ be the smallest positive integer $i$ such that $\vert \bigcup_{j<i} (T^{-1}(j)\cap N_G(u))\vert \geq \tau(u)-1$.
It is easy to see that $g$ is well defined.

Next we define the function $h$. For a vertex $u\in B$, if $d_{G-C}(u)=0$, then $h(u)=g(u)$. 
Now for all vertices $u$ such that 
$d_{G-C}(u)=1$, we define $h$ iteratively such that the set $h^{-1}(i)$ is defined after defining $h^{-1}(j)$ for all $j<i$.  First we explain the vertices that will be mapped to $1$ by the function $h$. 
For each vertex $u\in B$ with $d_{G-C}(u)=1$, if $\vert T^{-1}(0)\cap N_G(u)\vert \geq \tau(u)$, then we define $h(u)=1$. 
Now let us consider an integer $i>1$ and assume that we have defined $h^{-1}(j)$ for all $j<i$.  Let $Y_{<i}=\bigcup_{j<i} (T^{-1}(j)\cup h^{-1}(j))$. Let $u$ be a vertex in $B\setminus Y_{<i}$ such that $d_{G-C}(u)=1$. 
We set $h(u)=i$ if $\vert Y_{<i} \cap N_G(u)\vert \geq \tau(u)$. If there is a vertex $u$ such that $h(u)$ is not defined so far, then we set $h(u)=\infty$. Clearly, $h$ is well defined. 
Since for any vertex $u\in B$, $d_{G-C}(u)\leq 1$, by the definition of $g$ and $h$, we have that $g(u)\leq h(u)$. Next, we prove that the properties $(i)$-$(iii)$ are satisfied.

\medskip
\noindent
{\bf Proof of property (i).} 
Here, we prove that the function $g$ and $h$ satisfies the property (i) mentioned in the lemma. 
Recall that $S$ is a target set and we want to prove that for all vertex $u\in B\setminus S$ with $d_{G-C}=0$, 
$g(u)=h(u)$ and  $u$ is influenced  on  step $g(u)$ in the diffusion process starting from $S$.  
Since $T$ is the time stamp of $S$ on $C$ and $N_G(u)\subseteq C$, $u$ is influenced on step $i$, where 
$i$ is the least integer such that $\vert \bigcup_{j<i} (T^{-1}(j)\cap N_G(u))\vert \geq \tau(u)$. By the definition of $g$ and $h$, we have that $g(u)=i$ and $h(u)=g(u)$.

\medskip
\noindent
{\bf Proof of property (ii).} 
Let $u\in B\setminus S$ with $d_{G-C}=1$ and $v$ be the only neighbor  of $u$ in $G-C$.  By our assumption, we have that $v\in S$. Thus, $u$ is influenced on step $i$, where 
$i$ is the least integer such that $\vert \bigcup_{j<i} (T^{-1}(j)\cap N_G(u))\vert \geq \tau(u)-1$. 
By the definition of $g$, we have that $g(u)=i$. 



\medskip
\noindent
{\bf Proof of property (iii).} 
We prove using induction on $i$ that any vertex $u\in B\setminus S$ such that $d_{G-C}(u)=1$ and 
$v\notin S$ (where $v$ is the only neighbor of $u$ in $G-C$), $u$ is influenced on step $i$ 
if and only if $h(u)=i$. The base case is when $i=1$. Let $u$ be a vertex in $B\setminus S$ such that 
$h(u)=1$, $d_{G-C}(u)=1$, and 
$v\notin S$ (where $v$ is the only neighbor of $u$ in $G-C$). 
Suppose, $h(u)=1$. By the definition of $h$, we have that  $\vert T^{-1}(0)\cap N_G(u)\vert \geq \tau(u)$. 
This implies that $u$ is influenced on step $1$.  
Now, for the other direction, suppose $u$ is influenced on step $1$. Thus, since $v\notin S$, we have that 
$\vert T^{-1}(0)\cap N_G(u)\vert \geq \tau(u)$. Therefore, by the definition of $h$, $h(u)=1$.

Now consider the induction step $i>1$. Let $u$ be a vertex in $B\setminus S$ such that 
$d_{G-C}(u)=1$, $v\notin S$ (where $v$ is the only neighbor of $u$ in $G-C$). 

$(\Rightarrow)$
Suppose 
$h(u)=i$. Let $j$ be the integer such that $v$ is influenced on step $j$ in the diffusion process starting from $S$. Consider the case when $j<h(u)$. Since $d_{G-C}(v)=1$, $v$ and $u$ do not belong to  $S$, by induction hypothesis we have that $h(v)=j$. Since $h(u)=i$, $i$ is the least integer such that $\vert Y_{<i} \cap N_G(u)\vert \geq \tau(u)$ where $\{v\}=(Y_{<i}\setminus C)\cap N_G(u)$. Since $T$ is a the time stamp on $C$ and $h(v)=j$, $u$ is influenced on step $i$. Now consider the case when $j\geq h(u)$. Then, by induction hypothesis, $h(v)\geq j$.  This implies that $\vert Y_{<i} \cap N_G(u)\vert=|\bigcup_{j<i} (T^{-1}(j)\cap N_G(u))|$, because $v\notin Y_{<i}$. Also, since $i$ is the least integer such that $\vert Y_{<i} \cap N_G(u)\vert \geq \tau(u)$, we have that $u$ is influenced on step $i$. 

$(\Leftarrow)$
Suppose $i$ be the integer such that $u$ is influenced on step $i$. Let $j=h(v)$. Consider the case when $j<h(u)$. By induction hypothesis $v$ is influenced on step $j$. Therefore, $i$ is the least integer such that 
$\vert Y_{<i} \cap N_G(u)\vert \geq \tau(u)$ where $\{v\}=(Y_{<i}\setminus C)\cap N_G(u)$. Hence, $h(u)=i$. Now consider the case when $j\geq h(u)$. By induction hypothesis $v$ is not influenced on or before step $h(u)-1$.  Therefore $i$ is the least integer such that $\vert \bigcup_{j<i} (T^{-1}(j)) \cap N_G(u)\vert \geq \tau(u)$. Since $v\notin Y_{<i}$,   $\vert Y_{<i} \cap N_G(u)\vert= \vert \bigcup_{j<i} (T^{-1}(j)) \cap N_G(u)\vert\geq \tau(u)$. Hence, $h(u)=i$. 
\qed
\end{proof}

\begin{theorem} \label{thm:TSS-D1M}
There is an algorithm that given an instance $(G, \tau, k, C)$ of  \TSSDOM, where $t=\vert C\vert$, runs in $2^{O(t \log t)}\cdot n^{O(1)}$ time, and outputs a collection of instances $\mathcal{I}$
 of {\sc IP} such that the following holds:
\begin{itemize}
\item (a) The number of instances is $2^{\mathcal{O}(t \log t)}$. 
\item (b) For every $I\in {\cal I}$, the number of variables in $I$ is upper bounded by $2^{O(t)}$. 
\item (c) $(G,\tau, k,C)$ is a YES-instance of \TSSDOM\ if and only if there exists $I\in {\cal I}$ such that 
$I$ is a YES-instance of {\sc IP}.  
\end{itemize}
\end{theorem}

\begin{proof}

Consider the given instance $(G, \tau, k, C)$ \TSSDOM, where $G$ is a graph on $n$ vertices and $C:=\{v_1, \ldots , v_t\}$ is a degree-$1$ modulator of $G$ of size $t$. Without loss of generality we assume that for each vertex $v\in V(G)$, $\tau(v)\leq d_G(v)$. Let $B=V(G) \setminus C$. 
We know that the maximum degree in $G[B]$ is at most $1$. That is $G[B]$ is a disjoint union of isolated vertices and edges. For each vertex $v_i$ in $C$, we guess a time stamp $T(v_i)$ in which $v_i$ will be influenced. From Lemma \ref{Lemma:D1diffusiontime}, we know that $T(v_i)\in \{0,1, \ldots , 3t\}$. 

There are $t$ vertices in $C$ and each one of them can be assigned any one of these $3t+1$ distinct values. So, there are $(3t+1)^t$ possible guesses for the time stamps of the vertices in $C$. Now, among all the $(3t+1)^t$ possibilities, let us consider the $j$-th one. That is, consider that we are given a function $T_j \colon C \mapsto \{0,1,\ldots,3t\}$ of guessed time stamps for vertices in $C$. Then, we apply Lemma~\ref{lem:timestapforlarge} on $G,C,\tau$ and $T_{j}$ and get two functions $g_{j},h_j\colon B \mapsto {\mathbb N}$.  
Now we will construct an instance $I_j$ of {\sc IP}, where the number of variables in $I_j$ is bounded by a function of $t$. Towards that we define an equivalence relation between the components of $G[B]$. Notice that the components of $G[B]$ are isolated vertices and isolated edges. For each edge $\{u,v\}$ in $G[B]$, we fix an arbitrary ordering $(u,v)$. Now, we define an equivalence relation $\sim$ on the components of $G[B]$ as follows. We say that for any two isolated vertices $x$ and $y$ in $G[B]$, $x\sim y$ if and only if $N_G(x)=N_G(y)$ and $\tau(x)=\tau(y)$. For any two components $e_1=(u_1,v_1)$ and $e_2=(u_2,v_2)$ in $G[B]$, $e_1\sim e_2$ if and only if $N_G(u_1)\cap C=N_G(u_2)\cap C$, $N_G(v_1)\cap C=N_G(v_2)\cap C$, $\tau(u_1)=\tau(u_2)$ and $\tau(v_1)=\tau(v_2)$. Notice that since $|C|=t$ and $C$ is a degree-1 modulator of $G$, for each $v\in B$, $d_{G}(v)\leq t+1$. This implies that the number of equivalence classes in $\sim$ is upper bounded by $2^{t}(t+1)+2^{2t}(t+1)^2\leq 2^{2t}(t+1)^3$. Let $\ell$ be the number of equivalence classes in $\sim$. We know that $\ell \leq 2^{2t}(t+1)^3$. 

Now, we are ready to construct the output {\sc IP} instance $I_j$ as follows. 
 We have a function $T_j \colon C \mapsto \{0,1,\ldots,3t\}$ 
that represents a time stamp on $C$ (by a hypothetical solution $S$),  and  functions $g_{j},h_j\colon B \mapsto {\mathbb N}$ satisfying the properties mentioned in Lemma~\ref{lem:timestapforlarge}. 


Next, we explain the set of variables in $I_j$. Let $F$ be an equivalence class in $\sim$. Notice that either all the elements in $F$ are isolated vertices or all the elements in $F$ are (ordered) isolated edges in $G-C$. 
If the elements in $F$ are isolated vertices, then we have one variable $x_F$ associated with it. 
We interpret the value assigned to $x_F$ by the {\sc IP} to be the number of vertices in $F$ included in the target set. 
If the elements in $F$ are isolated edges, then we have three variables $x_{F,1}$, $x_{F,2}$, and $x_{F,3}$ associated with $F$. The value $x_{F,1}$ represents the number of elements $(u,v)$ in $F$, where $u$ is included in the target set and $v$ is not included in the target set. Similarly, the value $x_{F,2}$ represents the number of elements $(u,v)$ in $F$, where $v$ is included in the target set and $u$ is not included in the target set. The value $x_{F,3}$ represents the number of elements $(u,v)$ in $F$, where both $u$ and $v$ are included in the target set.

Now we explain the constraints of $I_j$. 
Let ${\cal F}$ be the set of equivalence class in $\sim$.
Let ${\cal Q}$ be the equivalence classes in $\sim$ that contain isolated vertices and ${\cal P}={\cal F}\setminus {\cal Q}$.  
Notice that we are  looking for a target set of size at most $k$. 
Thus we have the following linear inequality. 
\begin{equation}
\sum_{F\in {\cal Q}}x_F + \sum_{P\in {\cal P}} (x_{P,1}+x_{P,2}+2x_{P,3}) \leq k-\vert T_j^{-1}(0)\vert \label{eq1}
\end{equation} 

Let $\{u,v\}$ be an edge in $G[B]$ such that $\tau(u)=d_G(u)$ and $\tau(v)=d_G(v)$. Without loss of generality assume that we have fixed the order $(u,v)$. Let $F$ be the equivalence class such that $(u,v)\in F$.  Note that 
for any $(u',v')\in F$, $\tau(u)=d_G(u)=\tau(u')=d_G(u')$ and $\tau(v)=d_G(v)=\tau(v')=d_G(v')$. 
Thus, by Observation~\ref{obs:fullthr}, any target set contains at least one vertex from $\{u,v\}$.
So, for any such equivalence class $F$, we have the following constraint. 

\begin{equation}
 (x_{F,1}+x_{F,2}+x_{F,3}) =\vert F\vert  \label{eq2}
\end{equation} 

From the proof of Lemma~\ref{lem:timestapforlarge}, we have that for an equivalence class $F\in {\cal Q}$ and $u,v\in F$, $g_j(u)=g_j(v)$ and $h_j(u)=h_j(v)$. Therefore, we slightly abuse the notation and use $g_j(F)$ and $h_j(F)$ to denote $g_j(u)$ and $h_j(v)$, respectively, where $u\in F$. 
Again from the proof of Lemma~\ref{lem:timestapforlarge}, 
we have that for an equivalence class $F'\in {\cal P}$ and $(u_1,v_1),(u_2,v_2)\in F'$, $g_j(u_1)=g_j(u_2)$, $h_j(u_1)=h_j(u_2)$, $g_j(v_1)=g_j(v_2)$, and $h_j(v_1)=h_j(v_2)$. 
Therefore, we slightly abuse the notation and use $g_j(F',1)$, $g_j(F',2)$, $h_j(F',1)$, and $h_j(F',2)$ to denote $g_j(u_1)$, $g_j(v_1)$, $h_j(u_1)$, $h_j(v_1)$, respectively, where $(u_1,v_1)\in F'$.

For each vertex $w\in C$ such that $T_j(w)\geq 1$, we construct a constraint. Towards that we need some notations. Let ${\cal Q}_w$ be the set of equivalence classes in ${\cal Q}$ such that 
for any vertex $v$ in an equivalence class in ${\cal Q}_w$, $\{v,w\}\in E(G)$.
Let ${\cal P}_{w,1}$ be the set of equivalence classes in ${\cal P}$ such that for any element $(u,v)$ in an equivalence class in ${\cal P}_{w,1}$, $\{u,w\}\in E(G)$, and $\{v,w\}\notin E(G)$.
Let ${\cal P}_{w,2}$ be the set of equivalence classes in ${\cal P}$ such that for any element $(u,v)$ in an equivalence class in ${\cal P}_{w,2}$, $\{u,w\}\notin E(G)$, and $\{v,w\}\in E(G)$.
Let ${\cal P}_{w,3}$ be the set of equivalence classes in ${\cal P}$ such that for any element $(u,v)$ in an equivalence class in ${\cal P}_{w,3}$, $\{u,w\},\{v,w\}\in E(G)$.
%
%
For a predicate $Z$, $[Z]$ returns $1$ if $Z$ is true and $0$ otherwise. 
The constraint we construct for $w$ is the following. 

\begin{eqnarray}
Y_w+Z_w+ \sum_{v\in N_G(w)\cap C} [T_j(v)<T_j(w)] \geq \tau(w) \label{eq3}
\end{eqnarray}
where $Y_w$ and $Z_w$ are defined as follows. 
\begin{eqnarray*}
Y_w&=& \sum_{F\in {\cal Q}_w}x_{F}+\sum_{F\in {\cal P}_{w,1}}(x_{F,1}+x_{F,3})+\sum_{F\in {\cal P}_{w,2}}(x_{F,2}+x_{F,3})+\sum_{F\in {\cal P}_{w,3}} (x_{F,1}+x_{F,2}+2x_{F,3})
\end{eqnarray*}
\begin{eqnarray*}
Z_w&=& \sum_{F\in {\cal Q}_w} [g_j(F)<T_j(w)](|F|-x_{F}) \\
&&+ \sum_{F\in {\cal P}_{w,1}} [g_j(F,1)<T_j(w)] x_{F,2}+ [h_j(F,1)<T_j(w)] (|F|-x_{F,1}-x_{F,2}-x_{F,3})\\
&&+ \sum_{F\in {\cal P}_{w,2}} [g_j(F,2)<T_j(w)] x_{F,1}+ [h_j(F,2)<T_j(w)] (|F|-x_{F,1}-x_{F,2}-x_{F,3})\\
&&+ \sum_{F\in {\cal P}_{w,3}} \Big([g_j(F,1)<T_j(w)] x_{F,2} +[g_j(F,2)<T_j(w)] x_{F,1}  \\
&&\qquad\qquad + ([h_j(F,1)<T_j(w)]+[h_j(F,2)<T_j(w)]) (|F|-x_{F,1}-x_{F,2}-x_{F,3})  \Big)\\
\end{eqnarray*}

%

%

\begin{equation}
\mbox{Also, for any equivalence class $F\in {\cal Q}$, we have }\quad 0\leq x_F \leq \vert F\vert  \label{eq4}
\end{equation} 

For any equivalence class $P\in {\cal P}$, we have 
\begin{eqnarray}
 (x_{P,1}+x_{P,2}+x_{P,3}) &\leq& \vert P\vert  \label{eq5}\\
 x_{P,1}, x_{P,2},x_{P,3} &\geq& 0 \label{eq6}
\end{eqnarray}

This completes the construction of the instance $I_j$. 
Recall that the number of equivalence classes is bounded by  $2^{2t}(t+1)^3$ and we have constructed at most $3$ variables per equivalence class. This implies that the number of variables in $I_j$ is at most 
$3\cdot 2^{2t}(t+1)^3$.  
Hence, condition (b) in the theorem is true. 
We have already mentioned that we constructed $(3t+1)^t$ instances. Thus condition (a) is satisfied.  The proof for condition (c) 
is moved to appendix. 
\end{proof}

Theorem~\ref{thm:TSS-D1M} and Proposition~\ref{prop:Kannan} imply Theorem~\ref{thm:TSS-D1Mmain}.

\section{Conclusion}
In this work, we improve the running time of \textsc{TSS-D$0$M} to 
$2^{{\cal O}(t\log t)}n^{{\cal O}(1)}$ by reducing the problem to \textsc{Multi\mbox{-}Hitting Set} and solving the latter. We believe that the \textsc{Multi\mbox{-}Hitting Set} problem could be of independent interest. In Section \ref{Sec:D1M}, we prove that \textsc{TSS-D$1$M} is FPT. An open question we propose here is the parameterized complexity of \textsc{TSS-D$2$M}.

%
%
%
%

\newpage

\subsection{Proof of Condition (c) in Theorem~\ref{thm:TSS-D1M}}

\begin{clm}[$\star$]
\label{clm:corrD1M}
Condition (c) is true. 
\end{clm}

\begin{proof}
Let $S$ be a solution to the instance $(G,\tau, k,C)$ of \TSSDOM. 
Then there exits $j\in [(3t+1)^t]$ such that $T_j \colon C \mapsto \{0,1,\ldots,3t\}$ is a time stamp on $C$ by the target set $S$. Recall that $g_j$ is the output of Lemma~\ref{lem:timestapforlarge} on the input $(G,\tau,C,T_j)$. Next we will show that $I_j$ is a YES-instance.  Towards that we define values for variables of $I_j$ and prove this is a solution for $I_j$.  

Let $F$ be an equivalence class in $\sim$. 
If the elements in $F$ are isolated vertices, then we set $x_F=\vert S\cap F \vert$. 
Suppose  the elements in $F$ are isolated edges.
Then, the values for $x_{F,1}$, $x_{F,2}$, and $x_{F,3}$ are as follows. The value for $x_{F,1}$ is the number of elements $(u,v)$ in $F$, where $u$ is included in $S$ and $v$ is not included in $S$. Similarly, the value for $x_{F,2}$ is the number of elements $(u,v)$ in $F$, where $v$ is included in $S$ and $u$ is not included in $S$. The value for $x_{F,3}$ is the number of elements $(u,v)$ in $F$, where both $u$ and $v$ are included in 
$S$. Since $\vert S\vert\leq k$, \eqref{eq1} is satisfied. By Observation~\ref{obs:fullthr}  and the fact that $S$ is a target set,  we have that \eqref{eq2} is satisfied. Let $F$ be an equivalence class in $\sim$ such that the elements in $F$ are singleton vertices. This implies that $N_G(x)=N_G(y)$ and $\tau(x)=\tau(y)$ for any $x,y\in F$. Thus, by the construction of $g_j$ in the proof of Lemma~\ref{lem:timestapforlarge}, we have that $g_j(x)=g_j(y)=g_j(F)$. Similarly, let $F'$ be an equivalence class in $\sim$ such that $(x_1,y_1),(x_2,y_2)$ be elements in $F'$. Then, again by the definition of equivalence classes in $\sim$ and the construction of $g_j$ and $h_j$, we have that $g_j(x_1)=g_j(x_2)=g_j(F',1)$, $g_j(y_1)=g_j(y_2)=g_j(F',2)$, $h_j(x_1)=h_j(x_2)=h_j(F',1)$, and $h_j(y_1)=h_j(y_2)=h_j(F',2)$ 

Now we prove that \eqref{eq3} is satisfied. 
Fix a vertex $w\in C$. Consider the L.H.S of \eqref{eq3}. Notice that $Y_w$ is equal to the number of neighbors of $w$ in $B\cap S$.  By Lemma~\ref{lem:timestapforlarge}, $Z_w$ is equal to the the number of neighbors of $w$ in $B\setminus S$ that are influenced strictly before the step $T_j(w)$. Thus, since $S$ is a target set and $w$ is influenced on step $T_j(w)$, \eqref{eq3} holds. 
%
%
%
%
It is easy to verify that \eqref{eq4}-\eqref{eq6} are satisfied. 
Thus, we conclude that $I_j$ is a YES-instance.

Now we prove the reverse direction of the proof. Suppose there is a YES-instance $I_j$. Then, we need to prove that $(G,\tau, k,C)$ is a YES-instance of \TSSDOM. 
As $I_j$ is a YES-instance, there are non-negative integer values for the variables $\{x_F ~\colon~ F\in {\cal Q} \} \cup \{x_{P,1},x_{P,2},x_{P,3}~\colon~P\in {\cal P}\}$ such that  \eqref{eq1}-\eqref{eq6} are satisfied. 
From this we construct a set $S$ of size at most $k$ and prove that $S$ is a target set. Towards the construction of $S$, initially set $S:=\emptyset$. Now consider an equivalence class $F\in {\cal Q}$. We arbitrarily choose $x_F$ vertices from the equivalence class $F$ and add to $S$. Now consider an equivalence class  $P\in {\cal P}$. We choose arbitrary pairwise disjoint subsets $P_1,P_2,P_3$ of $P$ such that $|P_1|=x_{P_1}$, $|P_2|=x_{P_2}$, and $|P_3|=x_{P_3}$. For each element $(u,v)\in P_1$, we add $u$ to $S$. For each element $(u,v)\in P_2$, we add $v$ to $S$. For each element $(u,v)\in P_3$, we add both $u$ and $v$ to $S$. Finally we add $T_j^{-1}(0)$ to $S$. This completes the construction of $S$.

Next, we prove that $S$ is a target set of size at most $k$. The construction of $S$ and \eqref{eq1} implies that $\vert S \vert \leq k$.  Because of \eqref{eq2} for any edge $\{u,v\}\in E(G[B])$ such that $\tau(u)=d_G(u)$
and $\tau(v)=d_G(v)$, at least one of $u$ or $v$ is in $S$. Because of this, from the proof of Lemma~\ref{lem:timestapforlarge}, for any $u\in B$, $h_j(u)\neq \infty$. 

Now we define a function  $c\colon B\setminus S \mapsto {\mathbb N}$ as follows. For a vertex $u\in B\setminus S$, if $d_{G-C}(u)=0$, then we define $c(u)=g_j(u)$. 
Consider a vertex $u\in B\setminus S$ such that $d_{G-C}(u)=1$. Let $v$ be the only neighbor of $u$ in $G-C$.  Then, if $v\in S$, then  we define $c(u)=g_j(u)$ and otherwise $c(u)=h_j(u)$. Since, for any $u\in B$, $h_j(u)\neq \infty$, we have that $c(u)\neq \infty$.

Next using induction on $i$, we prove that the vertices in the set $T^{-1}_j(i)\cup c^{-1}(i)$ are influenced on step $i$ in the diffusion process starting from $S$.  The base case is when $i=1$. Let $w$ be a vertex in $T_j^{-1}(1)$. Consider \eqref{eq3}. Note that $Z_w=0$. Thus, we have that $Y_w+
\sum_{v\in N_G(w)\cap C} [T_j(v)<1] \geq \tau(w)$. By the construction of $S$, there are $Y_w$ vertices in $S\cap B$ that are neighbors  of $w$. Therefore, $w$ is influenced on step $1$. Let $w'\in c^{-1}(1)$. Since 
$T_j^{-1}(0)$ are influenced on step $0$ in the diffusion process starting at $S$, by Lemma~\ref{lem:timestapforlarge}, $w'$ is influenced on step $1$ (because in a diffusion process, the vertices influenced on step $1$ depends only on the seed set). 



Now consider the induction step $i>1$. Let $S_{<i}$ be the set of vertices influenced before step $i$. 
The construction of $S$, \eqref{eq3} and the induction hypothesis imply that for any $w\in T_j^{-1}(i)$, $\vert S_{<i}\cap N_G(w)\vert \geq \tau(w)$. Now, consider a vertex $w'\in c^{-1}(i)$.  By the induction hypothesis 
for any vertex $x\in \bigcup_{r<i} T^{-1}_j(r)$, $x$ is influenced on step $T_j(x)$. Thus, by Lemma~\ref{lem:timestapforlarge} and the definition of the function $c$, $w'$ is influenced on step $i$ (because in a diffusion process, the vertices influenced on step $i$ depends only on the vertices influenced before step $i$). 

Since $c(u)\neq \infty$ for any $u\in B\setminus C$, we have that all the vertices are influenced in the diffusion process starting from $S$. This completes the proof of the claim.
%
%
%
\end{proof}

\end{document}